 \newtheorem{stdef}{Definition}
 \newtheorem{prop}{Proposition}
\begin{document}

\title{\LARGE \textbf{Multidimensional Quasi-Monte Carlo Malliavin
Greeks}}

\author{Nicola \textsc{Cufaro Petroni}\\
Dipartimento di Matematica and \textsl{TIRES}, Universit\`a di Bari\\
\textsl{INFN} Sezione di Bari\\
via E. Orabona 4, 70125 Bari, Italy\\
\textit{email: cufaro@ba.infn.it}\\
\\
Piergiacomo \textsc{Sabino}\\
Dipartimento di Matematica, Universit\`a di Bari\\
via E. Orabona 4, 70125 Bari, Italy \\
\textit{email: sabino@dm.uniba.it}}

\maketitle
\def\indic{1\!\!1}
\begin{abstract}
We investigate the use of Malliavin calculus in order to calculate
the Greeks of multidimensional complex path-dependent options by
simulation. For this purpose, we extend the formulas employed by
Montero and Kohatsu-Higa to the multidimensional case. The
multidimensional setting shows the convenience of the Malliavin
Calculus approach over different techniques that have been
previously proposed. Indeed, these techniques may be computationally
expensive and do not provide flexibility for variance reduction. In
contrast, the Malliavin approach exhibits a higher flexibility by
providing a class of functions that return the same expected value
(the Greek) with different accuracies. This versatility for variance
reduction is not possible without the use of the generalized
integral by part formula of Malliavin Calculus. In the
multidimensional context, we find convenient formulas that permit to
improve the localization technique, introduced in Fournié et al
 and reduce both the computational cost and the
variance. Moreover, we show that the parameters employed for
variance reduction can be obtained \textit{on the flight} in the
simulation. We illustrate the efficiency of the proposed procedures,
coupled with the enhanced version of Quasi-Monte Carlo simulations
as discussed in Sabino, for the numerical estimation of the Deltas
of call, digital Asian-style and Exotic basket options with a fixed
and a floating strike price in a multidimensional Black-Scholes
market.
\end{abstract}

\noindent \textbf{Key Words}: Greeks, Risk-Management, Quasi-Monte
Carlo Methods, Malliavin Calculus.

\section{Introduction and Motivation}\label{sec:introduction}
Risk-sensitivities, also called Greeks, are fundamental quantities
for the risk-management. Greeks measure the sensitivities of a
portfolio of financial instruments with respect to the parameters of
the underlying model. Mathematically speaking, a greek is the
derivative of a financial quantity with respect to (w.r.t.) any of
the parameters of the problem. As these quantities measure risk, it
is important to calculate them quickly and with a small order of
error. In general, the computational effort required for an accurate
calculation of sensitivities is often substantially greater than
that required for price estimation.

The problem of greeks calculation can be casted as follows. Suppose
that the financial quantity of interest is described by
$\mathbb{E}\left[\psi\left(X(\alpha)\right)Y\right]$ (i.e., the
price of a derivative contract), where
$\psi:\mathbb{R}\rightarrow\mathbb{R}$ is a measurable function and
$X$ and $Y$ are two random variables (r.v.s). The greek, that we
denote $\theta$, is the derivative w.r.t. the parameter $\alpha$:
\begin{equation*}
\theta(\alpha)=\frac{\partial}{\partial
\alpha}\mathbb{E}\left[\psi\left(X(\alpha)\right)Y\right] =
\mathbb{E}\left[\frac{\partial}{\partial
\alpha}\psi\left(X(\alpha)\right)Y\right].
\end{equation*}
The most common of the Greeks are notably, Delta, Gamma, Vega,
Theta, Rho. These quantities are relatively simple to calculate for
plain vanilla contracts in the Black-Scholes (BS) market. However,
their evaluation is a complex and demanding task for exotic
derivative contracts such as Asian-style basket options where no
closed-formula is known.

The simplest and crudest approach is to employ the Monte Carlo (MC)
estimation of $\mathbb{E}\left[\psi\left(X(\alpha)\right)Y\right]$
for two or more values of $\alpha$ and then use finite-difference
approximations. However, this approach can be computationally
intensive and can produce large biases and large variances in
particular if $\psi=\indic_{A}$, where $A$ is a measurable set. A
 variant is the \emph{kernel method} (see Montero and
Kohatsu-Higa \cite{MKH2003}) which generalizes finite-difference
methods using ideas taken from the kernel density estimation.

Several alternatives have been proposed without finite-difference
approximation. \emph{Pathwise methods} (see Glasserman
\cite{Glass2004}) treat the parameter of differentiation $\alpha$ as
a parameter of the evolution of the underlying model and
differentiate this evolution. However, this approach is not always
applicable, notably when $\psi$ is not smooth (for instance
$\psi=\indic_{A}$). At the other extreme, the \emph{likelihood
method ratio} (see Glasserman \cite{Glass2004}) puts the parameter
in the measure describing the underlying model and differentiates
this measure. Even if the likelihood method ratio is applicable to
non-smooth functions it may provide high-variance estimators.
Indeed, compared to the pathwise method (when applicable), it
displays a higher variance. Summarizing, these two alternatives
involve two main ideas: differentiating the evolution or
differentiating the measure, respectively.

In this paper we investigate the use of Malliavin Calculus in order
to employ (Quasi)-Monte Carlo (QMC) simulations for the evaluation
of the sensitivities of complex multidimensional path-dependent
options. The multidimensional setting shows the very convenience of
the Malliavin Calculus approach over the different techniques that
have been proposed. Indeed, Malliavin Calculus allows to calculate
sensitivities as expected values whose estimation is a natural
application of MC methods. Formally:
\begin{equation*}
\theta(\alpha)= \mathbb{E}\left[\frac{\partial}{\partial
\alpha}\psi\left(X(\alpha)\right)Y\right]=\mathbb{E}\left[\psi\left(X(\alpha)\right)H\right].
\end{equation*}
where $H$ is a r.v. depending on $X$ and $Y$.

In the context of multidimensional options, we extend the formulas
employed by Montero and Kohatsu-Higa \cite{MKH2004} to the
multidimensional case. This approach gives a certain flexibility and
provides a class of functions (different r.v.s $H$) returning the
same expected value (the sensitivity) but with different accuracies.

Indeed, the previously mentioned alternative techniques may be
computationally expensive in the multidimensional case and do not
provide flexibility for variance reduction. This versatility for
variance reduction is not possible without the use of the
generalized integral by part formula of Malliavin Calculus. Advanced
techniques such as the kernel density estimation or more recent
approaches such as the Vibrato Monte Carlo in Gilles \cite{Gil09}
are difficult to employ and computationally demanding in
multi-dimensions. In order to avoid to use the Malliavin technique,
Chen and Glasserman \cite{CG07} have illustrated a procedure that
produces ``Malliavin Greeks'' without Malliavin Calculus. However,
since  this procedure involves both pathwise and likelihood ratio
methods, the estimators of the formulas for the sensitivities in
Chen and Glasserman \cite{CG07} have a high variance.

For these purposes we find convenient representations of $H$ that
permit to enhance the localization technique introduced in Fournié
et al. \cite{FLLL1999} and reduce both the computational cost and
the variance. Moreover, we show that the parameters employed for the
variance reduction can be obtained \textit{on the flight} in the
simulation by adaptive techniques. We illustrate the efficiency of
the proposed procedures, coupled with the enhanced version of QMC
simulations discussed in Sabino \cite{Sabino08b}, for the numerical
estimation of the Deltas of call, digital Asian-style and Exotic
basket options with a fixed and a floating strike price in a
multidimensional BS market.

The paper is organized as follows. Section~\ref{sec:Mall} is a short
introduction on Malliavin Calculus, Section~\ref{sec:Greeks} derives
the formulas employed for the computation of the Deltas of call
Asian basket options with floating and fixed strike, Asian digital
options and exotic options. Section~\ref{sec:SimSett} illustrates
the enhanced QMC approach that we adopt and describes in details how
to get the localization parameters with adaptive (Q)MC techniques;
Section~\ref{sec:NumInv} discusses the numerical experiments of the
study and finally Section~\ref{sec:conclusions} summarizes the most
important results and concludes the paper.


\section{Malliavin Calculus: Basic Results and
Notation}\label{sec:Mall} The aim of this section is to briefly
introduce the basic results from Malliavin Calculus and to fix the
notation we adopt in the rest of the paper. For more information on
this subject, we refer the reader to the book by Nualart
\cite{Nu06}.

Consider the probability space
$\left(\Omega,\mathcal{F},\mathbb{P}\right)$ where we define the
$M$-dimensional Brownian motion
$\mathbf{W}(t)=\left(W_1(t),\dots,W_M(t)\right)$, $t\in[0,T]$ and
given $0=t_0^{(n)},t_1^{(n)},\dots,t_n^{(n)}$, divide the interval
$[0,T]$ into $n$ subintervals $I_k=[t_k^{(n)},t_{k+1}^{(n)})$,
$k=0,\dots,n-1$. The superscripts indicates the fineness of the
subdivision of $[0,T]$. Now denote the vector
$\Delta^{(n)}_k=\left(\Delta^{(n)}_{k,1},\dots,\Delta^{(n)}_{k,M}\right)$
where $\Delta^{(n)}_{k,i}= W_i(t_{k+1}^{(n)})-W_i(t_{k}^{(n)})$.

Now consider a smooth function with polynomial growth
$\phi:\mathbb{R}^{n\times M}\rightarrow\mathbb{R}$,
$\phi\in\mathcal{C}_p^{\infty}$, of the form
$\phi=\phi(\Delta^{(n)}_0,\dots,\Delta^{(n)}_{n-1})$. Finally we
consider the following space:
\begin{equation}
\mathcal{S}_n=\left\{\phi(\Delta^{(n)}_0,\dots,\Delta^{(n)}_{n-1});\phi\in\mathcal{C}_p^{\infty}
\right\}\subset\mathcal{L}^2\left(\Omega\right),
\end{equation}
where $\left(\mathcal{S}_n\right)_{n\ge 1}$ form an increasing
sequence in $\mathcal{L}^2\left(\Omega\right)$.
\begin{stdef}
The union $\mathcal{S}=\left(\bigcup_{n\ge
1}\mathcal{S}_n\right)\subset\mathcal{L}^2\left(\Omega\right)$ is
called simple functional space and its elements are called simple
functionals.
\end{stdef}
We now can define the Malliavin derivative operator.
\begin{stdef}
Let $\phi\in\mathcal{S}$, then there exists $n\in\mathbb{N}^*$ such
that $\phi=\phi(\Delta^{(n)})$. The Malliavin derivative operator
$\mathbf{D}=\left(D^1,\dots,D^M\right)$ of $\phi$ at a point $s\in
I_k$ is defined as
\begin{equation}
D_s^m\phi = \sum_{k=0}^{n-1}\frac{\partial \phi}{\partial
x_{k,m}}(\Delta^{(n)}_0,\dots,\Delta^{(n)}_{n-1})\indic_{I_k^{(n)}}(s).
\end{equation}
Let us precise the notation. We have
$\mathbf{x}_k=\left(\mathbf{x}_{k,1},\dots,\mathbf{x}_{k,M}\right)$,
so $\mathbf{x}_k$ corresponds to the increment vector
$\Delta_{k}^{(n)}$ and the $m$-th component $x_{k,m}$ corresponds to
the $m$-th component $\Delta^{(n)}_{k,m}$.
\end{stdef}
Moreover we give the following definition.
\begin{stdef}
Introduce on $\mathcal{S}$ the norm
\begin{equation*}
\|F\|^2_{1,2}=\|F\|_{\mathcal{L}^2(\Omega)}+\|\mathbf{D}F\|_{\mathcal{L}^2([0,T]\times\Omega)},
\end{equation*}
the set $\mathbb{D}^{1,2}$ is the closure of $\mathcal{S}$ with
respect (w.r.t.) $\|\cdot\|_{1,2}$.
\end{stdef}
Finally we define the Skorohod integral $\delta^{\text{Sk}}$.
\begin{stdef}
The adjoint  of $\mathbf{D}$ in $\mathcal{L}^2\left(\Omega\times
[0,T]\right)$ is the operator:
\begin{equation}
\delta^{\text{Sk}}\,:\,\mathbf{u}=(u_1,\dots,u_m)\in\mathrm{dom}\left(\delta^{\text{Sk}}\right)\rightarrow
\delta^{\text{Sk}}\left(\mathbf{u}\right)
\end{equation}
which by definition satisfies for $\phi\in\mathbb{D}^{1,2}$ and
$\mathbf{u}\in\mathrm{dom}\left(\delta^{\text{Sk}}\right)$
\begin{equation}\label{eq:dual}
\sum_{m=1}^M\mathbb{E}\left[\int_0^TD_s^m\left(\phi
u_m(s)\right)ds\right]=
\mathbb{E}\left[\phi\sum_{m=1}^M\delta^{\text{Sk}}_m\left(u_m\right)\right]=\mathbb{E}\left[\phi\delta^{\text{Sk}}\left(\mathbf{u}\right)\right].
\end{equation}
\end{stdef}
Equation (\ref{eq:dual}) is known as duality relation.

It can be shown (see Nualart \cite{Nu06}) that if $\mathbf{u}(t)$ is
an Ito process, the Skorohod integral coincides with the Ito
integral of $\mathbf{u}$ and $\mathbf{D}_s\mathbf{u}=0$ if $s\ge t$
.

We now list some identities and useful results that will be employed
in the rest of this paper. Proofs can be found in Nualart
\cite{Nu06}.
\begin{enumerate}
\item $\forall F_1,\dots,F_d\in\mathbb{D}^{1,2}$ we have
$\phi(F_1,\dots,F_d)\in\mathbb{D}^{1,2}$ and $\forall m=1,\dots,M$
\begin{equation}
D_s^m\phi=\sum_{k=1}^d\frac{\partial\phi}{\partial
x_k}\left(F_1,\dots,F_d\right)D_s^mF_k.
\end{equation}
For example, let $\mathbf{a}\in\mathbb{R}^M$ we have
\begin{equation}
D_s^m\exp\left(\sum_{i=1}^Ma_iW_i(t)\right)=a_m\exp\left(\sum_{i=1}^Ma_iW_i(t)\right)\indic_{[0,t]}(s)
\end{equation}
\item Let $\phi(t)$ be an adapted process we have:
\begin{equation}
D_s^m\int_0^T\phi(t)dW^m(t)=\phi(s) + \int_s^TD_s^m\phi(t)dW^m(t),
\end{equation}
and
\begin{equation}
D_s^m\int_0^T\phi(t)dt=\int_s^TD_s^m\phi(t)dt,
\end{equation}

\item If $\phi\in\mathbb{D}^{1,2}$,
$\mathbf{u}\in\mathrm{dom}\left(\delta^{\text{Sk}}\right)$, and
$\phi\mathbf{u}\in\mathrm{dom}\left(\delta^{\text{Sk}}\right)$ then
\begin{equation}\label{eq:SkorProp}
\delta^{\text{Sk}}\left(\phi\mathbf{u}\right)=\phi\delta^{\text{Sk}}\left(\mathbf{u}\right)-\sum_{m=1}^M\int_0^Tu_m(s)D_s^m\phi
ds.
\end{equation}
\end{enumerate}


\section{Multidimensional Malliavin Sensitivities}\label{sec:Greeks}
Consider for simplicity a complete market whose risky assets,
$S_i,\,i=1,\dots,M$, are driven by the following dynamics (in the
risk-neutral measure):
\begin{eqnarray}
dS_i(t)&=&rS_i(t)dt + S_i(t)\sigma_i(t)dB_i(t)\quad i=1,\dots,M,\\
S_i(0)&=&x_i,\nonumber
\end{eqnarray}
where $r$ is the constant risk-free rate,
$\sigma(t)=\left(\sigma_1(t),\dots,\sigma_M(t)\right)$ is the vector
of the volatilities process and $\mathbf{B}(t)$ is the vector of the
$M$-dimensional Brownian motion in the risk-neutral measure with
$dB_{i}(t)dB_m(t)=\rho_{im}(t)dt$; $\rho$ is the correlation matrix
among the Brownian motions (it can be stochastic). The existence of
the vector process $\sigma(t)$ is guaranteed by theorem 9.2.1 in
Shreve \cite{Shreve04}. Applying the risk-neutral pricing formula
(see Shreve \cite{Shreve04}), the calculation of the price at  time
$t$ of any European derivative contract with maturity date $T$ boils
down to the evaluation of an (discounted) expectation:
\begin{equation}
a(t) =
\exp\left(-r(T-t)\right)\mathbb{E}\left[\psi\right|\mathcal{F}_t]\label{1.3}\text{,}
\end{equation}
\noindent  the expectation is under the risk-neutral probability
measure and $\psi$ is a generic $\mathcal{F}_T$-measurable variable
that determines the payoff of the contract.

In order to apply Malliavin Calculus we need to write the above
dynamics in terms of uncorrelated Brownian motions:
\begin{eqnarray}\label{eq.Dyn}
dS_i(t)&=&rS_i(t)dt +
S_i(t)\sigma_i(t)\sum_{m=1}^M\alpha_{im}(t)dW_m(t)\quad
i=1,\dots,M\nonumber\\
S_i(0)&=&x_i,\nonumber
\end{eqnarray}
where $\sum_{m=1}^M\alpha_{im}(t)\alpha_{km}(t)=\rho_{ik}(t), a.s.$
and we have defined
$\sigma_{im}(t)=\sigma_i(t)\sum_{m=1}^M\alpha_{im}(t), a.s.$.

Hereafter we denote $\delta^{\mathrm{Kr}}$ and $\delta^{\mathrm{D}}$
the Kronecker delta and the Dirac delta, respectively. Naturally at
time $T$ we have $\psi=a(T),\,a.s.$.

The following proposition generalizes the formula in Montero
Kohatsu-Higa \cite{MKH2004} to the multidimensional case.
\begin{prop}\label{prop:1}
Assuming the dynamics (\ref{eq.Dyn}) let $m(T)$ be a
$\mathcal{F}_T$-measurable r.v. (it can depend on the entire
trajectory) and consider $\psi = \psi(m(T))$. Denote $G_k$ the
partial derivative
\begin{equation}
G_k=\frac{\partial m(T)}{\partial x_k},\quad k=1,\dots,M,
\end{equation}
Suppose that $\psi\in\mathbb{D}^{1,2}$,  the $k$-th delta (the
$k$-th component of the gradient) is
\begin{equation}\label{eq:delta}
\Delta_k =\frac{\partial a(0)}{\partial x_k}=e^{-rT}
\mathbb{E}\left[\psi'G_k\right]=e^{-rT}\mathbb{E}\left[\psi\sum_{m=1}^M\delta_m^{\mathrm{Sk}}(G_ku_m)\right],
\end{equation}
\noindent where $\mathbf{u}=(u_1,\dots,u_M)\in
\mathrm{dom}(\delta^{\mathrm{Sk}})$, $\mathbf{z}=(z_1,\dots,z_m)\in
\mathrm{dom}(\delta^{\mathrm{Sk}})$, $G_k\mathbf{u}\in
\mathrm{dom}(\delta^{\mathrm{Sk}})$ and
\begin{eqnarray*}
 u_m(s) &=& \frac{z_m(s)}{\sum_{h=1}^M\int_0^Tz_h(s)D^h_s m(T)ds}  \\
  \sum_{h=1}^M\int_0^Tz_h(s)D^h_s m(T)ds &\neq& 0,\quad \mathrm{a.s.}
\end{eqnarray*}
\end{prop}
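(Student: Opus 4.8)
The plan is to establish the two equalities in \eqref{eq:delta} separately: the first by ordinary differentiation of the pricing functional, and the second by the Malliavin integration-by-parts (duality) formula \eqref{eq:dual}.

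First I would write $a(0)=e^{-rT}\mathbb{E}\left[\psi(m(T))\right]$ and differentiate under the expectation sign with respect to the initial datum $x_k$. Assuming enough regularity to interchange $\partial/\partial x_k$ and $\mathbb{E}$, the ordinary chain rule gives $\partial_{x_k}\psi(m(T))=\psi'(m(T))\,\partial_{x_k}m(T)=\psi'G_k$, which yields the first equality $\Delta_k=e^{-rT}\mathbb{E}[\psi'G_k]$. This is the pathwise starting point, and its only delicate feature is the legitimacy of differentiating inside the expectation; I would justify it by dominated convergence under the polynomial-growth and $\mathbb{D}^{1,2}$ hypotheses, treating non-smooth $\psi$ by first arguing for smooth $\psi$ and then extending by density, since the final weight does not involve $\psi'$.

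The second equality is where the Malliavin machinery enters, and the central idea is to represent $\psi'$ as a contraction of the Malliavin derivative $\mathbf{D}\psi$ against the normalized process $\mathbf{u}$. By the chain rule (identity~1 of Section~\ref{sec:Mall}) one has $D_s^m\psi=\psi'(m(T))\,D_s^m m(T)$. I would then verify the defining property of $\mathbf{u}$: dividing $z_m$ by the scalar $Z=\sum_{h=1}^M\int_0^T z_h(s)\,D_s^h m(T)\,ds$ gives
\begin{equation*}
\sum_{m=1}^M\int_0^T u_m(s)\,D_s^m m(T)\,ds=\frac{1}{Z}\sum_{m=1}^M\int_0^T z_m(s)\,D_s^m m(T)\,ds=1,
\end{equation*}
which is well defined precisely because $Z\neq 0$ a.s. Contracting the chain-rule identity against $\mathbf{u}$ then isolates the unknown factor,
\begin{equation*}
\sum_{m=1}^M\int_0^T (D_s^m\psi)\,u_m(s)\,ds=\psi'\sum_{m=1}^M\int_0^T (D_s^m m(T))\,u_m(s)\,ds=\psi'.
\end{equation*}

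Finally I would multiply by the $\mathcal{F}_T$-measurable factor $G_k$, pull it inside the $ds$-integral (it is constant in $s$), and apply the duality relation \eqref{eq:dual} with $\phi=\psi$ and the process $(G_k u_m)_{m=1}^M$, obtaining
\begin{equation*}
\mathbb{E}[\psi'G_k]=\mathbb{E}\left[\sum_{m=1}^M\int_0^T (D_s^m\psi)\,G_k u_m(s)\,ds\right]=\mathbb{E}\left[\psi\sum_{m=1}^M\delta_m^{\mathrm{Sk}}(G_k u_m)\right],
\end{equation*}
which is the second equality after restoring the discount factor $e^{-rT}$. The hard part will not be the algebra but the verification of the domain conditions that make this last step rigorous: that $\mathbf{u}\in\mathrm{dom}(\delta^{\mathrm{Sk}})$ and $G_k\mathbf{u}\in\mathrm{dom}(\delta^{\mathrm{Sk}})$, exactly the hypotheses imposed in the statement, together with the square-integrability of the normalized weight, which is genuinely delicate because the division by $Z$ may be nearly singular where the denominator approaches zero. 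These are the points I would have to check carefully for each concrete choice of $m(T)$ and $\mathbf{z}$.
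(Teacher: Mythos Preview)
Your proposal is correct and follows essentially the same route as the paper: apply the Malliavin chain rule $D_s^m\psi=\psi'\,D_s^m m(T)$, use the normalization built into the definition of $u_m$ to isolate $\psi'$ (equivalently $\psi'G_k$), and then invoke the duality relation \eqref{eq:dual} to transfer the Malliavin derivative onto the Skorohod integral of $G_k\mathbf{u}$. The only cosmetic difference is that the paper multiplies by $G_k z_h$ before dividing through, whereas you first normalize to obtain $\psi'$ and then multiply by $G_k$; your added discussion of the first equality and of the domain/integrability issues is more careful than the paper's terse treatment.
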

The derivative $\psi'$ may have no mathematical sense indeed, the
aim of the proposition is to overtake the problem with the formalism
of distributions and Malliavin Calculus.
\begin{proof}
Compute
\begin{equation}
D_s^h\psi=\psi'D^h_s m(T)\quad h=1,\dots,M.
\end{equation}
Suppose $\mathbf{z}\in \mathrm{dom}(\delta^{\mathrm{Sk}})$ and
multiply the above equation by $z_h(t)$ and by $G_k$; then sum for
all $h=1,\dots,M$ and integrate:
\begin{equation}
\sum_{h=1}^M\int_0^TG_kz_h(s)D_s^h\psi(T)ds=\sum_{h=1}^M\int_0^TG_kz_h(s)\psi'(T)D^h_s
m(T)ds.
\end{equation}
$\psi'G_k$ does not depend on $s$ and due to the definition of
$\mathbf{u}$ we can write
\begin{equation}\label{eq:proofMall}
\psi'G_k=\sum_{m=1}^M\int_0^Tu_m(s)G_kD_s^m\psi(T))ds.\quad
k=1,\dots,M
\end{equation}
Finally compute the expected value of both sides of
(\ref{eq:proofMall})
\begin{equation}
\mathbb{E}\left[\psi'G_k\right]=\mathbb{E}\left[\sum_{m=1}^M\int_0^Tu_m(s)G_kD_s^m\psi
ds\right].
\end{equation}
By duality
\begin{equation}\label{DeltaMulti}
\Delta_k =
\mathbb{E}\left[\psi\delta^{\mathrm{Sk}}(G_k\mathbf{u})\right]\quad
k=1,\dots,M,
\end{equation}
and this concludes the proof.
\end{proof}


\subsection{Greeks in the Multidimensional Black-Scholes
Market}\label{subsec:BS} In this section we apply Proposition
\ref{prop:1} to the case of a multidimensional Black-Scholes market
where the volatilities vector process in Equation (\ref{eq.Dyn}) is
not stochastic (for simplicity we consider constant volatilities and
correlations). The main advantage of the Malliavin approach over
different techniques, for example the methods in Gilles \cite{Gil09}
and the Chen and Glasserman \cite{CG07}, is that Proposition
\ref{prop:1} allows the possibility of variance  and computational
reduction due to the flexibility in choosing either the process
$\mathbf{u}$, or better $\mathbf{z}$. The methods illustrated in
Gilles \cite{Gil09} and Chen and Glasserman \cite{CG07} are
difficult to employ if we assume a multidimensional dynamics and
they do not allow versatility for variance reduction.

We consider the case $z_h=\alpha_k\delta^{\mathrm{Kr}}_{hk}$;
$h,k=1,\dots,M$, $\alpha_k=1, \forall k$. Namely, in order to
compute the $k$-th delta we consider only the $k$-th term of the
Skorohod integral reducing the computational cost. In particular,
this choice is motivated by the fact that  we can enhance the
localization technique introduced by Fourn\'{e} et al.
\cite{FLLL1999}. With this setting we need to control only
$\delta^{\mathrm{Sk}}_{k}(\cdot)$ and then only the $k$-th component
of $\mathbf{W}(t)$. This enhancement is not possible with other
approaches that furnish only a fixed representation of the
components of the multidimensional deltas.

Under the above assumptions for the vector process $\mathbf{z}$, we
explicitly derive the multidimensional deltas  for the following
exotic options in the BS market:
\begin{enumerate}
\item Discretely monitored Asian basket options with fixed strike. Assume
$t_1<t_2\dots<t_N=T$, where $T$ is the maturity of the contract and
the payoff function
\begin{equation}
\psi =\left(\sum_{i=1}^{M}\sum_{j=1}^{N}w_{ij}\,S_{i}\left(
t_{j}\right) -K\right)^+,
\end{equation}%
\noindent where $K$ is the strike price and $\sum_{i,j}w_{ij}=1$. In
this case we have
\begin{equation*}
G_k = \frac{1}{x_k}\sum_{j=1}^{N}w_{kj}S_k(t_j)
\end{equation*}
and
\begin{equation*}
m(T) = \sum_{i=1}^{M}\sum_{j=1}^{N}w_{ij}\,S_{i}\left( t_{j}\right).
\end{equation*}
 We then calculate the following quantities
\begin{eqnarray*}\label{prova}
L_k\!\!\!&=&\!\!\!\int_0^TD_s^km(T)ds=\sum_{i=1}^M\sum_{j=1}^Nw_{ij}S_i(t_j)t_j\sigma_{ik},\label{prova1}\\
A_k=\int_0^TD^k_sG_kds\!\!\! &=&\!\!\!
\sum_{j=1}^Nw_{jk}S_k(t_j)t_j\sigma_{kk}(s)=\sum_{j=1}^Nw_{jk}S_k(t_j)t_j\sigma_{k}\label{prova2}\\
B_k=\int_0^TD^k_sL_kds\!\!\!
&=&\!\!\!\sum_{j=1}^Nw_{ij}S_i(t_j)t_j^2\sigma_{ik}^2,\label{prova3}
\end{eqnarray*}
and hence
\begin{equation}\label{DelAsMultiDisCorr}
\Delta_k =
\mathbb{E}\left[\psi\delta_k^{\mathrm{Sk}}\left(\frac{G_k}{L_K}\right)\right],\quad
k=1,\dots,M.
\end{equation}
Due to the equation (\ref{eq:SkorProp}) we can write the  the
Skorohod integral above for $\,k=1,\dots,M$ as:
\begin{equation}
\delta_k\left(\frac{G_k}{L_K}\right)=\frac{G_k}{L_K}\,W_k(T)-
\frac{1}{L_k^2}\left(L_k\int_0^TD^k_sG_kds-G_k\int_0^TD^k_sL_kds\right)=\frac{G_k}{L_K}\left(W_k(T)+\frac{B_k}{L_k}\right)-\frac{A_k}{L_k}.
\end{equation}
With another choice of $\mathbf{z}$, for instance $z_h=\alpha_h$,
$\Delta_k$ would depend linearly on the whole $M$-dimensional
Brownian motion, making the localization technique less efficient.
\item Discretely monitored Asian basket options with floating strike $K(T) =
\frac{\sum_{i=1}^MS_{i}(T)}{M}$. For simplicity we assume
$w_{ij}=\frac{1}{MN}\forall i,j$. The calculation is similar to the
previous payoff function, indeed we can write $\psi=n(T)^+$ where
\begin{equation*}
n(T) = m(T) - K(T).
\end{equation*}
In analogy, we have
\begin{eqnarray*}
F_k&=&\frac{\partial n(T)}{x_k}=G_k-\frac{S_k(T)}{Mx_k}=G_k-T_k,\\
M_k&=&\int_0^TD_s^kn(T)ds=L_k-\int_0^TD_s^kK(T)=L_k-\frac{\sum_{i=1}^MS_i(T)T\sigma_{ik}}{M}=L_k-U_k,\\
\int_0^TD^k_sF_kds\!\!\!
&=&\!\!\!A_k-\int_0^TD_s^kT_kds=A_k-\frac{S_k(T)T\sigma_{k}}{Mx_k}=A_k-V_k,\\
\int_0^TD^k_sM_kds\!\!\! &=&\!\!\!\int_0^TD^k_sL_kds -
\int_0^TD^k_sU_kds=B_k-\frac{\sum_{i=1}^MS_i(T)T^2\sigma_{ik}^2}{M}=B_k-P_k,
\end{eqnarray*}
with the quantities $T_k,U_k,V_k,P_k,\forall k$ automatically
defined by the above equations. Then
\begin{equation}
\Delta_k =
\mathbb{E}\left[\psi\delta_k^{\mathrm{Sk}}\left(\frac{F_k}{M_K}\right)\right],\quad
k=1,\dots,M.
\end{equation}
and
\begin{equation}
\delta_k\left(\frac{F_k}{M_K}\right)=\frac{F_k}{M_K}\left(W_k(T)+\frac{B_k-P_k}{M_k}\right)-\frac{A_k-V_k}{M_k}.
\end{equation}
\item Digital Asian basket options with fixed strike.
\begin{equation}
\psi =\indic_{m(T)\ge K}.
\end{equation}%
This type of payoff function fulfills the hypotheses of Proposition
\ref{prop:1} and we might adopt equation (\ref{DelAsMultiDisCorr}).
However, due to the properties of the Dirac delta
$\delta^{\text{D}}$ and Proposition \ref{prop:1} we can write
\begin{equation*}
\Delta_k =e^{-rT}
\mathbb{E}\left[\delta^{\text{D}}_K\left(m(T)\right)G_k\right]=e^{-rT}
\mathbb{E}\left[\delta^{\text{D}}_K\left(m(T)\right)\phi\left(\frac{m(T)-K}{h}\right)G_k\right]=e^{-rT}\mathbb{E}\left[\psi\sum_{m=1}^M\delta_m^{\mathrm{Sk}}(G_k\phi
u_m)\right],
\end{equation*}
where we assume that $\phi$, $\phi'$ are square integrable,
$\phi(0)=1$, $\phi G_k$ is Skorohod integrable $\forall k=1,\dots,M$
and $h>0$. The aim of this setting is to reduce the variance of the
MC estimator of $\Delta_k$ by tuning the localization function
$\phi$ around the strike $K$ with a convenient choice of the
parameter $h$ (see Kohatsu-Higa and Patterson \cite{KP2002}).

Under this assumption the Skorohod integral in equation
(\ref{eq:delta}) becomes:
\begin{equation*}
\delta^{\text{Sk}}\left(\phi\left(\frac{m(T)-K}{h}\right) G_k
\mathbf{u}\right) = \phi\left(\frac{m(T)-K}{h}\right)
G_k\delta^{\text{Sk}}(\mathbf{u})-\sum_{m=1}^M\int_0^Tu_m(s)D_s^m\left(\phi
G_k\right)ds
\end{equation*}
where for $m=1,\dots,M$
\begin{equation*}
D_s^m\left(\phi\left(\frac{m(T)-K}{h}\right) G_k\right) =
\phi\left(\frac{m(T)-K}{h}\right)
D_s^mG_k+\frac{G_k}{h}\phi'\left(\frac{m(T)-K}{h}\right)D_s^mm(T),
\end{equation*}
then the Skorohod integral
$\delta^{\text{Sk}}\left(\phi\left(\frac{m(T)-K}{h}\right) G_k
\mathbf{u}\right)$ is
\begin{equation}
\phi\left(\frac{m(T)-K}{h}\right)
G_k\delta^{\text{Sk}}(\mathbf{u})-\frac{\phi\left(\frac{m(T)-K}{h}\right)\sum_{m=1}^M\int_0^Tu_m(s)D_s^mG_kds}{\sum_{m=1}^M\int_0^Tu_m(s)D_s^mm(T)ds}
-\frac{G_k}{h}\phi'\left(\frac{m(T)-K}{h}\right).
\end{equation}
Finally, with our choice for the simple process $\mathbf{u}$ the
last equation becomes:
\begin{equation}
\phi\left(\frac{m(T)-K}{h}\right)
G_k\delta^{\text{Sk}}(\mathbf{u})-\frac{\phi\left(\frac{m(T)-K}{h}\right)A_k}{L_k}
-\frac{G_k}{h}\phi'\left(\frac{m(T)-K}{h}\right),
\end{equation}
where $\delta^{\text{Sk}}(\mathbf{u})$ depends on the terms that we
have found in the case of Call Asian basket options.

It is worthwile to say that the same localization procedure and the
Malliavin approach adopted for digital options can be employed for
the computation the Gamma (second order derivative) for Call Asian
basket options.
\end{enumerate}

\subsection{Greeks for Exotic Options}\label{subsec:GreekExotic}
In Proposition \ref{prop:1} we have supposed that the payoff
function $\psi$ depends on $m(T)$ only. With the notation adopted in
the BS setting, suppose for instance, that
$\psi=\max\left(m(T)-K,K(T)-K,0\right)$, where $K$ is a fixed price,
now we cannot rely on Proposition \ref{prop:1} to derive the
expression of the sensitivities of such an exotic option. Here
$\psi$ depends separately on two random variables $m(T)$ and $K(T)$.
In the following we extend Proposition \ref{prop:1} in order to
allow such a dependence.
\begin{prop}\label{prop2} Assuming the dynamics (\ref{eq.Dyn}) suppose $\psi=\psi(X,Y)$. For simplicity we set $r=0$, denote
$G_k=\frac{\partial Y}{\partial x_k}$ and $T_k=\frac{\partial
X}{\partial x_k}$. Let $\mathbf{u}$ and $\mathbf{p}$ be two simple
processes belonging to $\mathrm{dom}(\delta^{\mathrm{Sk}})$. Define
the following $\mathcal{F}_T$-measurable r.v.s:
\begin{eqnarray}
  a_1= \sum_{m=1}^M\int_0^Tu_m(s)D_s^mX
ds,& a_2= \sum_{m=1}^M\int_0^Tu_m(s)D_s^mY ds\\
  b_1 =\sum_{m=1}^M\int_0^Tp_m(s)D_s^mX,& b_2= \sum_{m=1}^M\int_0^Tp_m(s)D_s^mY
  ds\\
  O_1=\sum_{m=1}^M\int_0^TT_ku_m(s)D_s^m\psi ds, &   O_2=\sum_{m=1}^M\int_0^TG_kp_m(s)D_s^m\psi
  ds\\
  U_1 = \frac{b_2-\frac{b_1G_k}{T_k}}{a_1b_2-a_2b_1}, & U_2 =
  \frac{\frac{a_2T_k}{G_k}-a_1}{a_1b_2-a_2b_1}.
\end{eqnarray}
Finally, suppose that $a_1b_2-a_2b_1\ne 0, a.s.$ and
$U_1T_k\mathbf{u}-U_2G_k\mathbf{p}$ is Skorohod integrable, we have:
\begin{equation}
\Delta_k=\frac{\partial \mathbb{E}\left[\psi(X,Y)\right]}{\partial
x_k}=\Delta_k=
\mathbb{E}\left[T_k\partial_X\psi(X,Y)+G_k\partial_Y\psi(X,Y)\right]=\mathbb{E}\left[\psi(X,Y)\sum_{m=1}^M\delta^{\text{Sk}}_m\left(
U_1T_ku_m-U_2G_kp_m\right)\right],
\end{equation}
\noindent where $\partial_X$ and $\partial_Y$ denote the partial
derivatives with respect to the first and second variable,
respectively.
\end{prop}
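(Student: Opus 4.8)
The plan is to follow the scheme of Proposition~\ref{prop:1}, adapting it to the two-variable dependence $\psi=\psi(X,Y)$ by replacing the single weight $\mathbf{u}$ with a pair $(\mathbf{u},\mathbf{p})$ and solving a $2\times2$ linear system. The first equality is just the ordinary chain rule: differentiating $\psi(X,Y)$ in $x_k$ gives $\partial_{x_k}\psi(X,Y)=T_k\partial_X\psi+G_k\partial_Y\psi$, and interchanging the derivative with the expectation (the formal step, exactly as in Proposition~\ref{prop:1}) yields $\Delta_k=\mathbb{E}[T_k\partial_X\psi+G_k\partial_Y\psi]$.

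To reach the Skorohod representation I would apply the chain rule for the Malliavin derivative (identity~1 of Section~\ref{sec:Mall}), namely $D_s^m\psi=\partial_X\psi\,D_s^mX+\partial_Y\psi\,D_s^mY$, and substitute it into the definitions of $O_1$ and $O_2$. Since $\partial_X\psi$, $\partial_Y\psi$, $T_k$ and $G_k$ carry no $s$-dependence they factor out of the $s$-integrals, and the definitions of $a_1,a_2,b_1,b_2$ collapse the result to $O_1=T_k(a_1\partial_X\psi+a_2\partial_Y\psi)$ and $O_2=G_k(b_1\partial_X\psi+b_2\partial_Y\psi)$.

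These two identities constitute a linear system for $\partial_X\psi$ and $\partial_Y\psi$ with coefficient determinant $a_1b_2-a_2b_1$; under the hypothesis $a_1b_2-a_2b_1\ne0$ a.s.\ it is invertible, and solving by Cramer's rule for the particular combination $T_k\partial_X\psi+G_k\partial_Y\psi$ returns exactly $U_1O_1-U_2O_2$ with $U_1,U_2$ as given. This is the algebraic core of the proof, and I expect the only genuine computation to be the bookkeeping that verifies $U_1T_ka_1-U_2G_kb_1=T_k$ and $U_1T_ka_2-U_2G_kb_2=G_k$. Because $U_1,U_2,T_k,G_k$ are again $s$-independent, I can then rewrite $U_1O_1-U_2O_2=\sum_{m=1}^M\int_0^T\bigl(U_1T_ku_m-U_2G_kp_m\bigr)(s)\,D_s^m\psi\,ds$, take expectations, and apply the duality relation~(\ref{eq:dual}) --- legitimate since $U_1T_k\mathbf{u}-U_2G_k\mathbf{p}$ is assumed Skorohod integrable --- to obtain $\mathbb{E}\bigl[\psi\sum_{m=1}^M\delta^{\mathrm{Sk}}_m(U_1T_ku_m-U_2G_kp_m)\bigr]$, which is the claim.

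The main obstacle is the distributional subtlety already flagged after Proposition~\ref{prop:1}: for a non-smooth payoff such as $\psi=\max(m(T)-K,K(T)-K,0)$ the symbols $\partial_X\psi$ and $\partial_Y\psi$ have no pointwise meaning, so every manipulation involving them must be read in the sense of distributions and justified rigorously by approximating $\psi$ with smooth functions and passing to the limit, the key point being that the final identity no longer contains any derivative of $\psi$. A secondary technical requirement is the almost-sure non-degeneracy $a_1b_2-a_2b_1\ne0$, which both guarantees solvability of the linear system and makes the weights $U_1,U_2$ well defined.
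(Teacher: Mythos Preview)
Your proposal is correct and follows essentially the same route as the paper: apply the Malliavin chain rule $D_s^m\psi=\partial_X\psi\,D_s^mX+\partial_Y\psi\,D_s^mY$, pair it once with $T_k\mathbf{u}$ and once with $G_k\mathbf{p}$ to obtain the $2\times2$ linear system $O_1=T_k(a_1\partial_X\psi+a_2\partial_Y\psi)$, $O_2=G_k(b_1\partial_X\psi+b_2\partial_Y\psi)$, solve for the combination $T_k\partial_X\psi+G_k\partial_Y\psi=U_1O_1-U_2O_2$, and finish by duality. The paper performs the same computation, only presenting the Cramer step by first exhibiting $T_k\partial_X\psi$ and $G_k\partial_Y\psi$ separately before summing.
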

\begin{proof}
Compute:
\begin{equation}
D_s^m\psi(X,Y) = \partial_X\psi D_s^mX+\partial_Y\psi D_s^mY.
\end{equation}
As done in the proof of Proposition \ref{prop:1}, multiply for $T_k$
and $u_m$, sum for all $m$ and integrate:
\begin{equation}\label{eq1:prop2}
\sum_{m=1}^M\int_0^TT_ku_m(s)D_s^m\psi
ds=\sum_{m=1}^M\int_0^TT_ku_m(s)D_s^mX
ds+\sum_{m=1}^M\int_0^TT_ku_m(s)D_s^mY ds.
\end{equation}
Now repeat the procedure above considering $G_k$ and $p_h(s)$, we
have
\begin{equation}\label{eq2:prop2}
\sum_{m=1}^M\int_0^TG_kp_m(s)D_s^m\psi
ds=\sum_{m=1}^M\int_0^TG_kp_m(s)D_s^mX
ds+\sum_{m=1}^M\int_0^TG_kp_m(s)D_s^mY ds.
\end{equation}
We rewrite Equations (\ref{eq1:prop2}) and (\ref{eq1:prop2}) as a
linear system
\begin{equation}\label{sys:prop2}
\left\{\begin{array}{cc}
         O_1 =& a_1 T_k\partial_X\psi + a_2 T_k\partial_Y\psi\\
         O_2 =& b_1 G_k\partial_X\psi + g_2 G_k\partial_Y\psi
       \end{array}
\right.
\end{equation}
Our aim is to compute
$T_k\partial_X\psi(X,Y)+G_k\partial_Y\psi(X,Y)$ such that we can
apply the duality relation. After some algebra we get that
\begin{equation*}
T_k\partial_X\psi =
\frac{b_2G_kO_1-a_2T_kO_2}{G_k\left(a_1b_2-a_2b_1\right)},
\end{equation*}
\begin{equation*}
G_k\partial_Y\psi =
\frac{a_1T_kO_2-b_1G_kO_1}{T_k\left(a_1b_2-a_2b_1\right)}
\end{equation*}
and
\begin{equation}
T_k\partial_X\psi + G_k\partial_Y\psi=
\frac{O_1\left(b_2-\frac{b_1G_k}{T_k}\right)-O_2\left(\frac{a_2T_k}{G_k}-a_1\right)}{a_1b_2-a_2b_1},
\end{equation}
then we have
\begin{equation}
\Delta_k=\mathbb{E}\left[O_1U_1-O_2U_2\right]=\mathbb{E}\left[\sum_{m=1}^M\int_0^T\left(
U_1T_ku_m(s)-U_2G_kp_m(s)\right)D_s^m\psi ds\right],
\end{equation}
by duality
\begin{equation}
\Delta_k=\mathbb{E}\left[\psi\sum_{m=1}^M\delta^{\text{Sk}}_m\left(U_1T_ku_m-U_2G_kp_m\right)\right]
\end{equation}
and this concludes the proof.
\end{proof}
We can adapt the result of Proposition \ref{prop2} to the BS market.
Again the Malliavin Calculus approach is very versatile and permits
to reduce the computational burden and the variance of the MC by
enhancing the localization technique. As done before we consider
$u_m(s)=\delta^{\text{Kr}}_{mk},\forall s$ and
$p_m(s)=s\delta^{\text{Kr}}_{mk},\forall s$, in order to fulfill the
hypothesis of Proposition \ref{prop2}.

The formula for the $k$-th component of the delta is
\begin{equation}
\Delta_k=\mathbb{E}\left[\psi(X,Y)\left(\delta^{\text{Sk}}_k\left(
U_1T_k\right)-\delta^{\text{Sk}}_k\left(sU_2G_k\right)\right)\right],
\end{equation}
and the two Skorohod integrals are respectively:
\begin{equation}\label{sk1}
\delta^{\text{Sk}}_k\left(
U_1T_k\right)=U_1T_kW_k(T)-U_1\int_0^TD_s^kT_kds-T_k\int_0^TD_s^kU_1ds,
\end{equation}
\begin{equation}\label{sk2}
\delta^{\text{Sk}}_k\left(
U_2G_k\right)=U_2G_k\int_0^TsdW^k_s-G_k\int_0^TsD^k_sU_2ds-U_2\int_0^TsD^k_sG_kds.
\end{equation}
In the MC estimation we can simulate the first term in the above
equation relying on the equality:
\begin{equation*}
\int_0^TsdW^k(s) = TW_k(T)-\int_0^TW_k(s)ds,
\end{equation*}
where $\int_0^TW_k(s)ds$ is approximated by a sum at the points
$t_1,\dots,t_N=T$.

In our numerical experiments we consider
$\psi=\max\left(m(T)-K,K(T)-K,0\right)$ where $m(T)$ and $K(T)$ have
been defined in Section \ref{subsec:BS}. The terms in Equations
(\ref{sk1}) and (\ref{sk2}) have been obtained as in Section
\ref{subsec:BS}.

\section{Simulation Setting}\label{sec:SimSett}
In this section we briefly describe the numerical setting that we
adopt for the QMC estimation of the Greeks by the Malliavin approach
formulas. We briefly illustrate the QMC method and discuss how to
conveniently find  the parameters of the localization technique
\emph{on the fly} by adaptive simulation.


\subsection{The Quasi-Monte Carlo Framework}\label{subsec:QMC}
Consider $I=\mathbb{E}[\psi(X)]$ where $\mathbf{X}$ is a
$d$-dimensional random vector and
$\psi:\mathbb{R}^d\rightarrow\mathbb{R}$, the QMC estimator of $I$
is $\hat{I}_{QMC}=\frac{\sum_{n=1}^{N_S}\psi(\mathbf{X}_n)}{N_S}$,
$N_S$ is the number of simulations, as for the standard MC. However
the points $\mathbf{X}_i$ are not pseudo-random  but are obtained by
low-discrepancy sequences. Low-discrepancy sequences do not mimic
randomness but display better regularity and distribution (see
Niederreiter \cite{Ni1992} for more on this subject). We do not
enter into the details of QMC methods and their properties, we just
stress the fact that such techniques do not rely on the central
limit theorem and the error bounds are given by the well known
Hlawka-Koksma inequality. Some randomness is then introduced in
order to statistically estimate the error of the estimation by the
sampled variance; this task is achieved by a technique called
\emph{scrambling} (see Owen \cite{ow2002}). The randomized version
of QMC is called Randomized Quasi-Monte Carlo (RQMC).

In our numerical estimation we use a randomized version of the
Sobol' sequence with Sobol's property A, that is one of the most
used low-discrepancy sequences (it is also a digital net).

Finally, in order to improve the efficiency of RQMC and reduce the
effect of the so-called \emph{curse of dimensionality}, we employ
the Linear Transformation (LT) technique introduced in Imai and Tan
\cite{IT2006} in the enhanced version illustrated in Sabino
\cite{Sabino08b,Sabino09}. The aim of the LT algorithm is to
concentrate the variance of $\psi$ into the components with higher
variability so that we may profit from the higher regularity of
low-discrepancy points and then reduce the nominal dimension of
$\psi$.

We briefly describe the LT algorithm. Consider a $d$ dimensional
normal random vector $\mathbf{T}\sim \mathcal{N}(\mu;\Sigma)$, a
vector $\mathbf{w}=(w_1,\dots,w_d)\in\mathbb{R}^d$ and let
$f(\mathbf{T}) =\sum_{i=1}^d w_iT_i$ be a linear combination of
$\mathbf{T}$. Let $C$ be such that $\Sigma=CC^T$ and assume
$\epsilon\sim\mathcal{N}(0,I_d)$ with
$\mathbf{T}\stackrel{\mathcal{L}}{=}C\epsilon$. The LT approach
considers $C$ as $C=C^{\text{LT}}=C^{\text{CH}}A$, with
$C^{\text{CH}}$ the Cholesky decomposition of $\Sigma$. Then, in the
linear case, we can define:
\begin{equation}\label{4.3.4}
    g^{A}(\epsilon):= f(C^{\text{CH}}A\epsilon) = \sum_{k=1}^d \alpha_k \epsilon_k + \mu\cdot
    \mathbf{w},
\end{equation}
\noindent where $\alpha_k= \mathbf{C^{\text{LT}}_{\cdot k}}\cdot
\mathbf{w}=\mathbf{A_{\cdot k}}\cdot\mathbf{B},\,k=1\dots,d$ and
$\mathbf{B}=(C^{\text{CH}})^T\mathbf{w}$ while $\mathbf{C_{\cdot
k}}$ and $\mathbf{A_{\cdot k}}$ are the $k$-th columns of the matrix
$C$ and $A$, respectively. In the linear case, setting
\begin{equation}\label{4.3.8}
    \mathbf{A_{\cdot 1}^*} = \pm\frac{\mathbf{B}} {\|\mathbf{B}\|},
\end{equation}
with arbitrary remaining columns with the only constrain that
$AA^T=I_d$, leads to the following expression:
\begin{equation}
g^A(\epsilon)=\mu\cdot\mathbf{w}\pm\|\mathbf{B}\|\epsilon_1.
\end{equation}
This is equivalent to reduce the effective dimension in the
truncation sense to $1$ and this means to maximize the variance of
the first component $\epsilon_1$.

In a non-linear framework, we can use the LT construction, which
relies on the first order Taylor expansion of $g^A$:
\begin{equation}\label{4.3.12}
    g^A(\epsilon) \approx g^A(\hat{\epsilon}) +
    \sum_{l=1}^d\frac{\partial
    g^A(\hat{\epsilon})}{\partial\epsilon_l}\Delta\epsilon_l.
\end{equation}
\noindent The approximated function is linear in the standard normal
random vector $\Delta\epsilon\sim\mathcal{N}(0,I_d)$ and we can rely
on the considerations above. The first column of the matrix $A^*$ is
then:
\begin{equation}
\mathbf{A_{\cdot 1}}^* =   \arg\max_{\mathbf{A_{\cdot 1}}\in
\mathbf{R^d}}\left(\frac{\partial
    g^A(\hat{\epsilon})}
    {\partial\epsilon_1}\right)^2
\end{equation}
Since we have already maximized the variance contribution for
$\left(\frac{\partial
g^A(\hat{\epsilon})}{\partial\epsilon_1}\right)^2$, we might
consider the expansion of $g$ about $d-1$ different points in order
to improve the method using adequate columns. More precisely Imai
and Tan \cite{IT2006} propose to maximize:
\begin{equation}\label{4.3.13}
\mathbf{A_{\cdot k}}^* =   \arg\max_{\mathbf{A_{\cdot k}}\in
\mathbf{R^d}}\left(\frac{\partial
    g^A(\hat{\epsilon}_k)}
    {\partial\epsilon_k}\right)^2
\end{equation}
subject to $\|\mathbf{A_{\cdot k}}^*\|=1$ and $\mathbf{A_{\cdot
j}}^*\cdot \mathbf{A_ {\cdot k}}^*=0, j=1,\dots,k-1, k\le d$.

Although equation (\ref{4.3.8}) provides an easy solution at each
step, the correct procedure requires that the column vector
$\mathbf{A_{\cdot k}}^*$ is orthogonal to all the previous (and
future) columns. Imai and Tan \cite{IT2006} propose to choose
$\hat{\epsilon}=\hat{\epsilon}_1=\mathbb{E}[\epsilon]=\mathbf{0}$,
$\hat{\epsilon}_2=(1,0,\dots,0),\dots\hat{\epsilon}_{k}=(1,1,1,\dots,0,\dots,0)$,
where the $k$-th point has $k-1$ leading ones. We refer to Sabino
\cite{Sabino08b,Sabino09} for the details of a fast and convenient
implementation of this algorithm.

\subsection{Enhancing the Localization
Technique}\label{subsec:localization}
The aim of the localization technique introduced in Fournié et al.
\cite{FLLL1999} is to reduce the variance of the MC estimator for
the sensitivities by localizing the integration by part formula
around the singularity. In the following, for simplicity, we
illustrate the localization technique in the case of vanilla call
options.

Fournié et al. \cite{FLLL1999} found that a (possible) expression
for the delta of a call option is:
\begin{equation}
\Delta = \frac{\partial}{\partial
x}\mathbb{E}\left[e^{rT}\left(S(T)-K\right)^+\right]=
\mathbb{E}\left[e^{rT}\left(S(T)-K\right)^+\frac{W(T)}{xT\sigma}\right].
\end{equation}
When the one-dimensional Brownian motion $W(T)$ is large, the term
$\left(S(T)-K\right)^+W(T)$  becomes even larger and has a high
variance. The idea is to introduce a localization function around
the singularity at $K$.

For $\delta>0$, set
\begin{equation}
H_{\delta}(y)=\left\{
\begin{array}{ccc}
  0, & for &y\le K-\delta, \\
  \frac{y-K+\delta}{2\delta} & for & y\in[K-\delta,K+\delta], \\
  1 & for & y\ge K+\delta,
\end{array}\right.
\end{equation}
and $G(z)=\int_{-\infty}^0H_{\delta}(y)dy$, then consider
$F_{\delta}(z)=(z-K)^+-G_{\delta}(z)$. Consequently, we have:
\begin{equation}\label{eq:loc}
  \Delta =e^{rT}\mathbb{E}\left[H_{\delta}(S(T))\frac{\partial S(T)}{\partial x}\right]+
  e^{rT}\mathbb{E}\left[F_{\delta}(S(T))\frac{W(T)}{xT\sigma}\right]
\end{equation}
$F_{\delta}$ vanishes for $z\le K-\delta$ and $z\ge K+\delta$, thus
$F_{\delta}(S(T))W(T)$ vanishes when $W(T)$ is large.

The same analysis, with similar results, is valid for the call-style
Asian options and the exotic option analyzed in Section
\ref{sec:Greeks}. Indeed, it suffices to replace $S(T)$ with the
average $\sum_{i,j}w_{ij}S_{i}(t_j)$ in the equations above and
consider an \emph{if, else} statement to select the localization
function when the strike price is stochastic or the option is
exotic. In addition, in the above options formulas, the role of the
``weight'' term $\frac{W(T)}{xT\sigma}$ is played by the Skorohod
integral. We remark that the formulas that we derived to compute the
$k$-th component of the delta display weights that depend only on
the Skorohod intergral w.r.t. the $k$-th component of the
multidimensional Brownian motion permitting to better control the
variance. If we would have chosen to control all the components of
the Skorohod integral, taking all non-zero components of the simple
vector process $\mathbf{u}$, we would have needed to tune different
$M$ Brownian motions making the localization technique less
efficient and computationally more expansive.

The choice of the parameter $\delta$ is of fundamental importance
for the result of the localization technique because it influences
the variance of the MC estimator. In the following we describe how
to employ an \emph{on the fly} efficient value based on adaptive MC
simulations. For ease of notation, we consider once more a vanilla
call option payoff bearing in mind that the same applies to the
payoffs under study. In such cases we need to make the substitution
illustrated above. A good candidate for $\delta$ would be the one
that minimizes the variance of the second term in equation
(\ref{eq:loc}).
\begin{eqnarray}
\delta^* =  \arg\min_{\delta>0}
Var\left[\frac{F_{\delta}(S(T))W(T)}{x\sigma T}\right]
\end{eqnarray}
and deriving w.r.t. $\delta$:
\begin{equation}
Var\left[-\frac{H_{\delta}(S(T))W(T)}{x\sigma T}\right]=
Var\left[-\frac{W(T)}{ x\sigma
T}\frac{\left(S(T)-K\right)-\delta}{2\delta}\right]=0.
\end{equation}
At this point we find $\delta$ such that:
\begin{equation}
\frac{W(T)}{ x\sigma T}\frac{\left(S(T)-K\right)-\delta}{2\delta} =
0, \quad \mathbb{P}-a.s.
\end{equation}
then
\begin{equation}
\delta = \frac{\frac{\left(S(T)-K\right)W(T)}{x\sigma T}}
{\frac{W(T)}{ x\sigma T}}.
\end{equation}
In order to have an operative parameter we then consider the
following approximation:
\begin{equation}
\delta = \frac{
 Var\left[\frac{\left(S(T)-K\right)W(T)}{x\sigma T}\right]}
 {Var\left[\frac{W(T)}{ x\sigma T}\right]}.
\end{equation}
As already mentioned, the considerations here above are still valid
for the computation of the greeks of the options we are considering.
As already illustrated, it suffices to replace
$\frac{W(T)}{xT\sigma}$ with the Skorohod integral and that is the
reason why we have always shown the term this term explicitly in the
calculations above. The same substitutions must be made to calculate
each $\delta$ for the each component of the Delta of the call type
Asian basket and exotic options since these results hold true in the
multidimensional setting as well.

In the spirit of adaptive MC techniques (see for instance Jourdain
\cite{Jourd09}), the variance above can be easily estimated by a MC
simulation and then, by fixing the same random draws, one runs a
second MC simulation in order to estimate the greeks.

In the case of one dimensional digital options the computation is
slightly different. Kohatsu-Higa and Patterson \cite{KP2002} claim
that a good candidate for $\delta$ is:
\begin{equation}
\delta = \left(\frac{\int_0^{\infty}\phi'(z)^2dz}
{\int_0^{\infty}\phi(z)^2dz
\mathbb{E}\left[\delta^{\text{Sk}}(\mathbf{u})^2\right]}
 \right)^{1/2}.
\end{equation}
Knowing that
$\mathbb{E}\left[\delta^{\text{Sk}}(\mathbf{u})\right]=0$, under the
assumption that $\phi(z)=e^{-|z|}$, we have
\begin{equation}
\delta =
\left(Var\left[\delta^{\text{Sk}}(\mathbf{u})\right]\right)^{-1/2}.
\end{equation}
The above parameter can be easily estimated by an adaptive MC
simulation in the multidimensional setting as illustrated for
call-type options.

We note that in our formulas the computation of the $k$-th delta
depends only on the $k$-th component of the Skorohod integral making
the localization technique easier to apply and the parameter
$\delta$ easy to calculate. Once more, we remark the fact that these
variance and computational reduction considerations are not possible
without using the Malliavin Calculus approach.
\section{Numerical Investigations}\label{sec:NumInv}
In this section we discuss the results of the (R)QMC estimation
based of the proposed approaches. We consider $M=5$ and $M=10$
underlying securities and an equally-spaced time grid with $N=64$
time points. Hence, the effective dimension of the (R)QMC simulation
is either $320$ or $640$. We estimate the multidimensional Deltas
(with respect to each underlying asset) of each contract discussed
before. The parameters chosen for the simulation are listed in Table
\ref{parameters}.
 \begin{table}\centering
 \caption{Inputs Parameters}\label{parameters}
 \begin{tabular}{cc}
     \begin{tabular}{lll}
     \\
         \hline
         $S_{i}(0)$&=&$100, \quad\forall i=1\dots,M$\\
         $r$ &=& $5$\%\\
         $T$&=&$1$\\
         $\sigma_i$ &=&$10\%+\frac{i-1}{9}40\% \quad i=1\dots,M$\\
         $\rho_{il}$&$= 50\% \quad i,l=1\dots,M$\\
         \hline
     \end{tabular}
 \end{tabular}
 \end{table}

We adopt RQMC simulations, based on the enhanced version illustrated
in Section~\ref{subsec:QMC}, and consists of $32$ replications each
of $2048$ random points. These random draws are obtained from a
Matou\^{s}ek affine plus random digital shift scrambled version (see
Matou\^{s}ek~\cite{Ma1998}) of the Sobol sequence satisfying Sobol's
property A (see Sobol~\cite{Sobol76}). We also avoid generating the
$320$ or $640$-dimensional Sobol' sequence by using a Latin
Supercube Sampling (LSS) method (Owen \cite{ow1998B}). Briefly, this
sampling mechanism is a scheme for creating a high-dimensional
sequence from sets of lower-dimensional sequences. For instance, a
$640$-dimensional low discrepancy sequence can be concatenated from
$13$ sets of a $50$-dimensional low discrepancy sequence by
appropriately randomizing the run order of the points (the last
concatenation neglects the last $10$ dimensions). For theoretical
justification of the LSS method, see Owen \cite{ow1998B}.

The computation is implemented in MATLAB on a laptop with an Intel
Pentium M, processor 1.60 GHz and 1 GB of RAM. We compute all the
optimal columns for the LT technique in Section~\ref{subsec:QMC}.
Such an LT construction is optimal if the integrand function is the
payoff of the option and hence is optimal for price estimation. In
contrast, our goal is the computation of the Deltas and this would
not seem to be the optimal choice. However, if we would have applied
the LT for the integrand function given by the Malliavin approach we
would have got as many LT-decomposition matrices as the number of
assets (one for each delta). This setting would remarkably increase
the CPU time making the estimation less convenient. The numerical
experiments below justify our assumption.

\subsection{Call with fix and floating Strike}
As a first experiment, we compute the Deltas of an Asian basket
option with fixed and floating strike. We compare the estimated
values of the Deltas and the accuracies obtained with different
approaches: finite differences, localization with different
parameters and finally localization coupled with adaptive
parameters. The choice of the parameters for the localization and
finite difference techniques is of fundamental importance because it
influences the variance of the estimator (see for instance L'Ecuyer
\cite{LE1995}). The numerical derivative is often calculated
assuming $\delta = 1\%$ (in our case $1\%$ of the initial price of
the underlying securities); this may not be  the optimal choice. In
addition, in the multidimensional computation (gradient estimation)
one should consider different $\delta$. Our approach based on
adaptive techniques overtakes this problem by calculating the
parameters \emph{on the fly}. These parameters are optimal meaning
that they provide the minimal variance of the estimator (in the
sense described in Section \ref{subsec:localization}).
Table~\ref{tab:fixCall} and Table~\ref{tab:floatCall} show the
results with different approaches obtained for an at-the-money Asian
call with fixed and floating strike and $M=10$ underlying assets.
All the estimated values are in statistical accordance but display
different accuracies. The finite difference errors are higher than
those obtained with localization (with the exemption of
$\delta=5\%$). In particular, when the strike is floating, this
technique returns a completed biased Delta associated with the
highest volatility. Finally, finite difference estimations require a
computational effort that  is $2.43$ times higher that those
obtained with localization. The adaptive localization and standard
localization perform equally well with the former having slightly
better precision and the advantage of selecting better localization
parameters for each component.

\begin{table}\centering
    \begin{tabular}{||c|c|c|c|c|c|c|c|c|c||}
        \hline
        \multicolumn{2}{||c|}{Adaptive}&\multicolumn{6}{|c|}{Localization}&\multicolumn{2}{|c||}{Fin.
        Diff.}\\
        \hline
        \multicolumn{2}{||c|}{}&\multicolumn{2}{|c|}{$\delta=1\%$}&\multicolumn{2}{|c|}{$\delta=5\%$}&\multicolumn{2}{|c|}{$\delta=10\%$}&\multicolumn{2}{|c||}{$\delta=1\%$}\\
        \hline
        $\Delta$&$\pm err$&$\Delta$&$\pm err$&$\Delta$&$\pm err$&$\Delta$&$\pm err$&$\Delta$&$\pm err$\\
        \hline
        5.43&0.18&5.43&0.28&5.4&2.9&5.43&0.19&5.43&0.31\\
        5.50&0.23&5.50&0.30&5.5&2.7&5.50&0.26&5.51&0.49\\
        5.58&0.29&5.57&0.30&5.6&2.9&5.58&0.34&5.60&0.52\\
        5.66&0.30&5.65&0.33&5.6&2.9&5.66&0.39&5.69&0.98\\
        5.74&0.39&5.73&0.41&5.7&3.0&5.74&0.35&5.79&0.81\\
        5.82&0.43&5.81&0.44&5.8&3.1&5.83&0.50&5.88&0.87\\
        5.90&0.45&5.89&0.40&5.9&2.9&5.91&0.52&5.99&1.27\\
        5.98&0.35&5.97&0.41&6.0&3.0&6.00&0.51&6.10&0.87\\
        6.07&0.47&6.05&0.40&6.0&3.2&6.09&0.58&6.20&1.40\\
        6.16&0.50&6.13&0.51&6.1&3.0&6.17&0.64&6.29&1.12\\
        \hline
    \end{tabular}
    \caption{Call Option with Fixed Strike, $M=10$: At-the-Money Deltas and Errors  ($\times 100$).}\label{tab:fixCall}
\end{table}
\begin{table}\centering
    \begin{tabular}{||c|c|c|c|c|c|c|c|c|c||}
        \hline
        \multicolumn{2}{||c|}{Adaptive}&\multicolumn{6}{|c|}{Localization}&\multicolumn{2}{|c||}{Fin.
        Diff.}\\
        \hline
        \multicolumn{2}{||c|}{}&\multicolumn{2}{|c|}{$\delta=1\%$}&\multicolumn{2}{|c|}{$\delta=5\%$}&\multicolumn{2}{|c|}{$\delta=10\%$}&\multicolumn{2}{|c||}{$\delta=1\%$}\\
        \hline
        $\Delta$&$\pm err$&$\Delta$&$\pm err$&$\Delta$&$\pm err$&$\Delta$&$\pm err$&$\Delta$&$\pm err$\\
        \hline
        $0.04$&$0.19$&$0.04$&$0.19$&$0.04$&$0.17$&$0.04$&$0.13$&$0.04$&$0.11$\\
        $0.12$&$0.20$&$0.12$&$0.23$&$0.11$&$0.30$&$0.12$&$0.19$&$0.13$&$0.19$\\
        $0.19$&$0.32$&$0.19$&$0.35$&$0.19$&$0.32$&$0.20$&$0.26$&$0.22$&$0.33$\\
        $0.27$&$0.36$&$0.27$&$0.38$&$0.27$&$0.36$&$0.28$&$0.32$&$0.31$&$0.44$\\
        $0.34$&$0.29$&$0.35$&$0.37$&$0.34$&$0.45$&$0.36$&$0.43$&$0.41$&$0.45$\\
        $0.42$&$0.35$&$0.43$&$0.35$&$0.42$&$0.47$&$0.45$&$0.44$&$0.51$&$0.63$\\
        $0.50$&$0.39$&$0.50$&$0.47$&$0.50$&$0.47$&$0.53$&$0.51$&$0.61$&$0.77$\\
        $0.58$&$0.50$&$0.59$&$0.55$&$0.58$&$0.61$&$0.62$&$0.57$&$0.71$&$0.98$\\
        $0.67$&$0.45$&$0.67$&$0.55$&$0.67$&$0.64$&$0.71$&$0.51$&$0.81$&$1.65$\\
        $0.74$&$0.64$&$0.75$&$0.64$&$0.75$&$0.58$&$1.72$&$233.73$&$7\times 10^4$&$2\times 20^7$\\
        \hline
    \end{tabular}
    \caption{Call Option with Floating Strike, $M=10$: Deltas and Errors ($\times 100$).}\label{tab:floatCall}
\end{table}

In order to have a complete picture of the sensitivity of the
discussed techniques, we repeat the experiment considering only
$M=4$ assets and several strike prices. This further analysis cannot
be performed for Asian option with floating strike.
Figure~\ref{fig:deltaFixCall} and \ref{fig:errorFixCall} show the
estimated Deltas and errors, respectively. Since for at-the-money
options the finite difference approach provided lower accuracy, we
avoided to report its results. In term of precision, in this setting
as well, the standard localization with $\delta = 1\%$ and the
adaptive localization return the most accurate results. In
particular, these two approaches perform equally well with the
former one having a more constant trend across all the moneyness.
\begin{figure}
    \centering
    \includegraphics[width=1.1\textwidth]{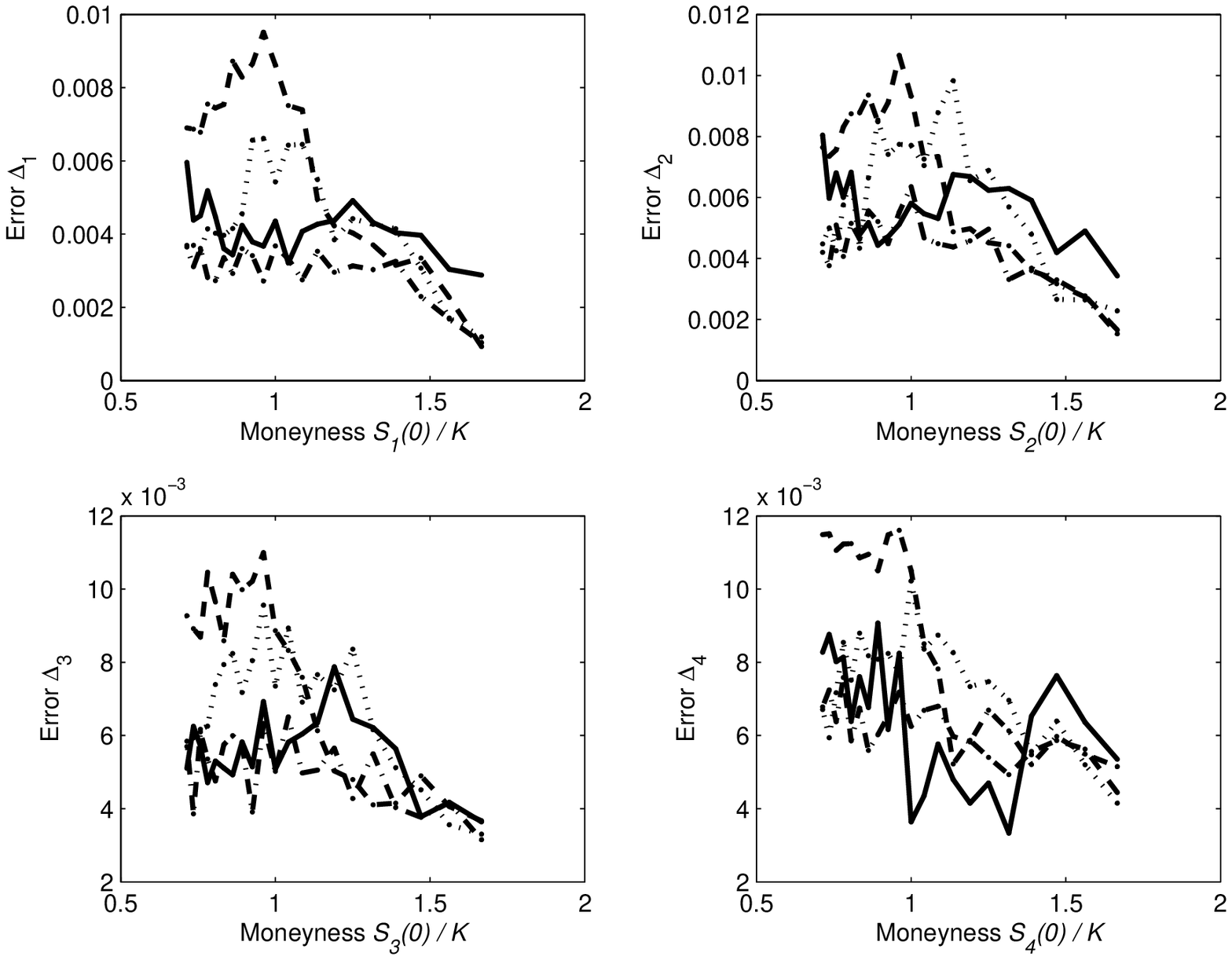}%
    \caption{Call Option with Fixed Strike, $M=4$: Estimation Errors. \newline Adaptive: Solid Line, Loc.
    $\delta=0.01$: Dashed line, Loc. $\delta=0.1$: Dotted line, Loc. $\delta=0.05$: Dash-dotted line.}
    \label{fig:errorFixCall}
\end{figure}

\begin{figure}
    \centering
    \includegraphics[width=1.1\textwidth]{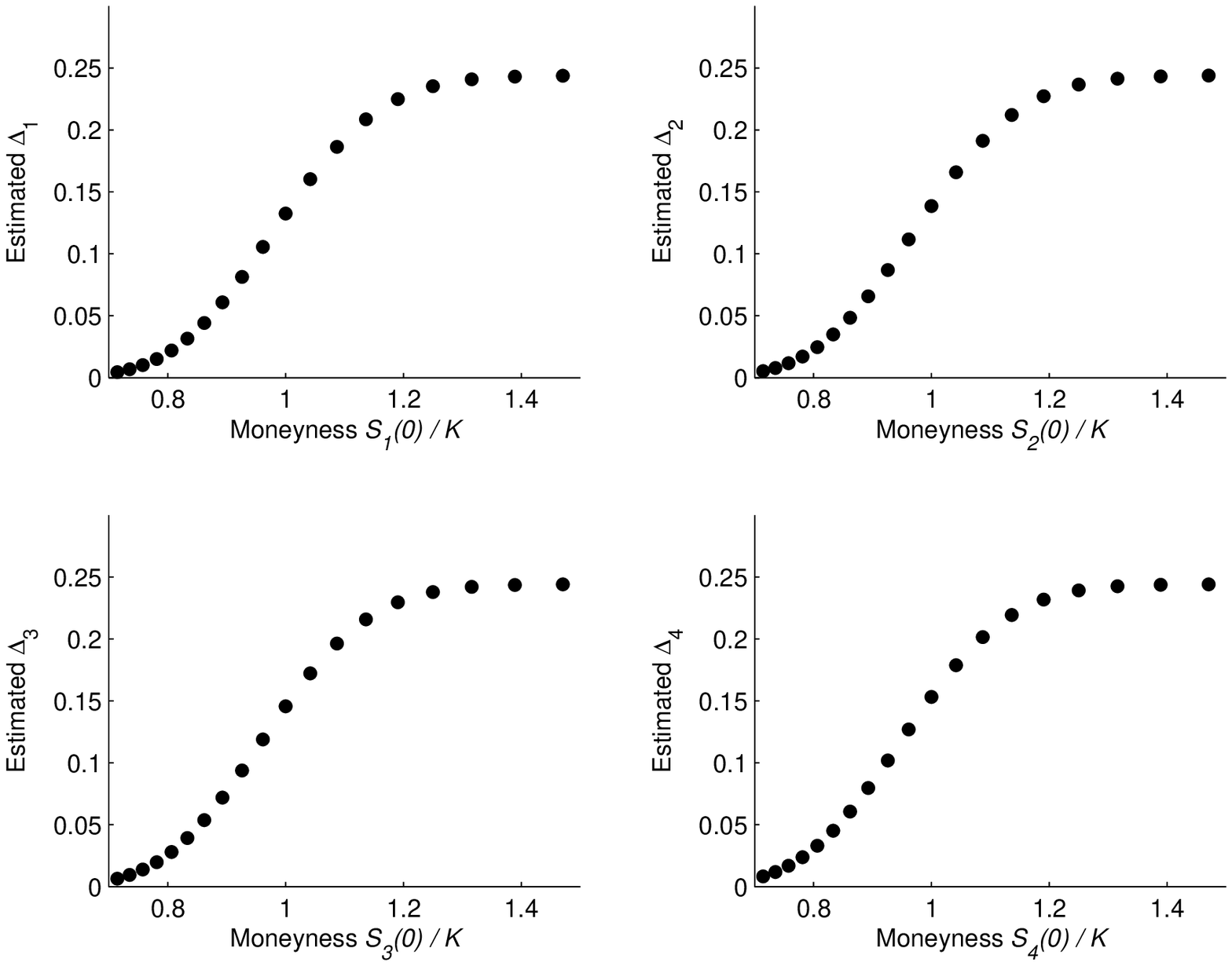}%
    \caption{Call Option with Fixed Strike, $M=4$: Estimated Deltas with
    the Adaptive Localization.}\label{fig:deltaFixCall}
\end{figure}

\subsection{Digital Call}
The aim of this subsection is to describe the results of our
numerical investigation assuming Asian digital options. The
following discussion and description have a double purpose. Since
the payoff of digital option can be seen as the derivative (in the
sense of distribution) of the payoff of a call option, the
methodology and the localization parameters described in
Section~\ref{subsec:BS} can be be rearranged and used to compute the
Gamma (and cross sensitivities in the multidimensional setting) of a
call option (naturally with some changes). In addition, the Delta of
a digital option is a more demanding task due to the irregular
payoff that is pathologically not differentiable.

We repeat the organization of our discussion as done for the Asian
call options and consider only a fixed strike price. Table
~\ref{tab:dig} shows the estimated multidimensional Deltas and their
errors for an at-the-money digital option on $M=10$ underlying
securities. The best accuracy with the standard localization
technique is not achieved anymore with $\delta = 1\%$, that means
that in some situations it is not the optimal choice. In contrast,
the adaptive localization is the best performing technique in terms
of precision. It returns better localization parameters that provide
an unbiased estimator with lower variance.

\begin{table}\centering
    \begin{tabular}{||c|c|c|c|c|c|c|c|c|c||}
        \hline
        \multicolumn{2}{||c|}{Adaptive}&\multicolumn{6}{|c|}{Localization}&\multicolumn{2}{|c||}{Fin.
        Diff.}\\
        \hline
        \multicolumn{2}{||c|}{}&\multicolumn{2}{|c|}{$\delta=1\%$}&\multicolumn{2}{|c|}{$\delta=5\%$}&\multicolumn{2}{|c|}{$\delta=10\%$}&\multicolumn{2}{|c||}{$\delta=1\%$}\\
        \hline
        $\Delta$&$\pm err$&$\Delta$&$\pm err$&$\Delta$&$\pm err$&$\Delta$&$\pm err$&$\Delta$&$\pm err$\\
        \hline
        $0.30$&$0.15$&$0.30$&$0.75$&$0.31$&$3.8$&$0.30$&$0.18$&$0.30$&$0.19$\\
        $0.29$&$0.23$&$0.29$&$0.86$&$0.31$&$3.5$&$0.29$&$0.31$&$0.29$&$0.27$\\
        $0.29$&$0.29$&$0.30$&$0.74$&$0.31$&$3.6$&$0.29$&$0.31$&$0.28$&$0.34$\\
        $0.29$&$0.45$&$0.29$&$0.80$&$0.30$&$3.2$&$0.28$&$0.38$&$0.28$&$0.45$\\
        $0.29$&$0.48$&$0.29$&$0.78$&$0.31$&$3.6$&$0.28$&$0.41$&$0.26$&$0.49$\\
        $0.29$&$0.56$&$0.29$&$0.88$&$0.31$&$3.4$&$0.27$&$0.49$&$0.25$&$0.49$\\
        $0.28$&$0.56$&$0.28$&$0.82$&$0.31$&$3.6$&$0.27$&$0.64$&$0.24$&$0.65$\\
        $0.27$&$0.55$&$0.28$&$0.98$&$0.30$&$3.6$&$0.25$&$0.48$&$0.23$&$0.50$\\
        $0.27$&$0.58$&$0.28$&$0.80$&$0.29$&$3.9$&$0.24$&$0.72$&$0.22$&$0.77$\\
        $0.27$&$0.60$&$0.27$&$0.86$&$0.30$&$3.4$&$0.24$&$0.70$&$0.20$&$0.60$\\
        \hline
    \end{tabular}
    \caption{Digital Option with Fixed Strike, $M=10$: At-the-Money Deltas and Errors.}\label{tab:dig}
\end{table}
As done before, we run a QMC simulation considering only $M=4$
assets and analyze the results by varying the strike price.
Figure~\ref{fig:errorDig} and \ref{fig:deltaDig} show the estimated
Deltas and errors, respectively. Once more the adaptive localization
approach displays the lowest error.

\begin{figure}
    \centering
    \includegraphics[width=1.1\textwidth]{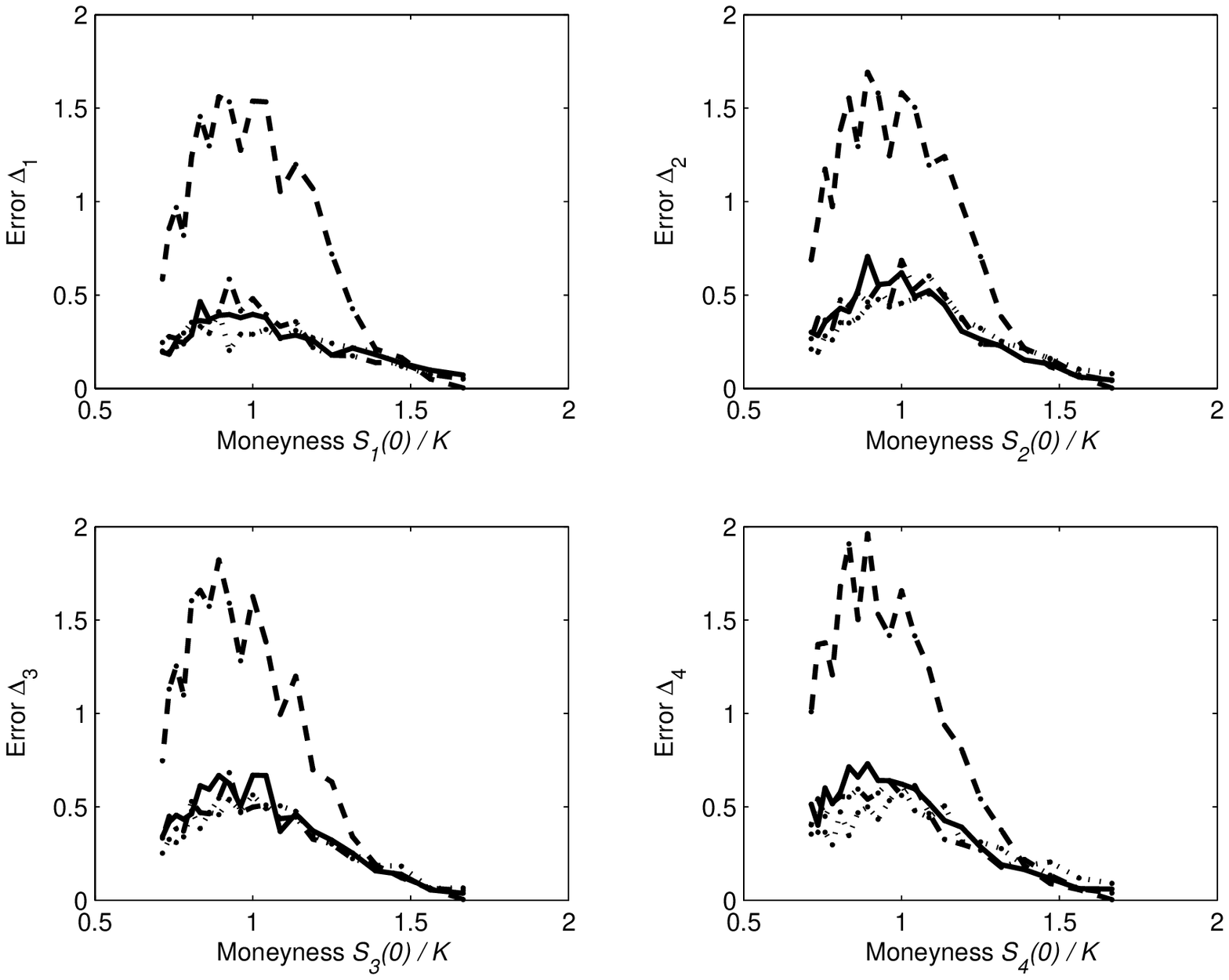}%
    \caption{Digital Option with Fixed Strike, $M=4$: Estimation Errors. \newline Adaptive: Solid Line, Loc.
    $\delta=0.01$: Dashed line, Loc. $\delta=0.1$: Dotted line, Loc. $\delta=0.05$: Dash-dotted line.}
    \label{fig:errorDig}
\end{figure}

\begin{figure}
    \centering
    \includegraphics[width=1.1\textwidth]{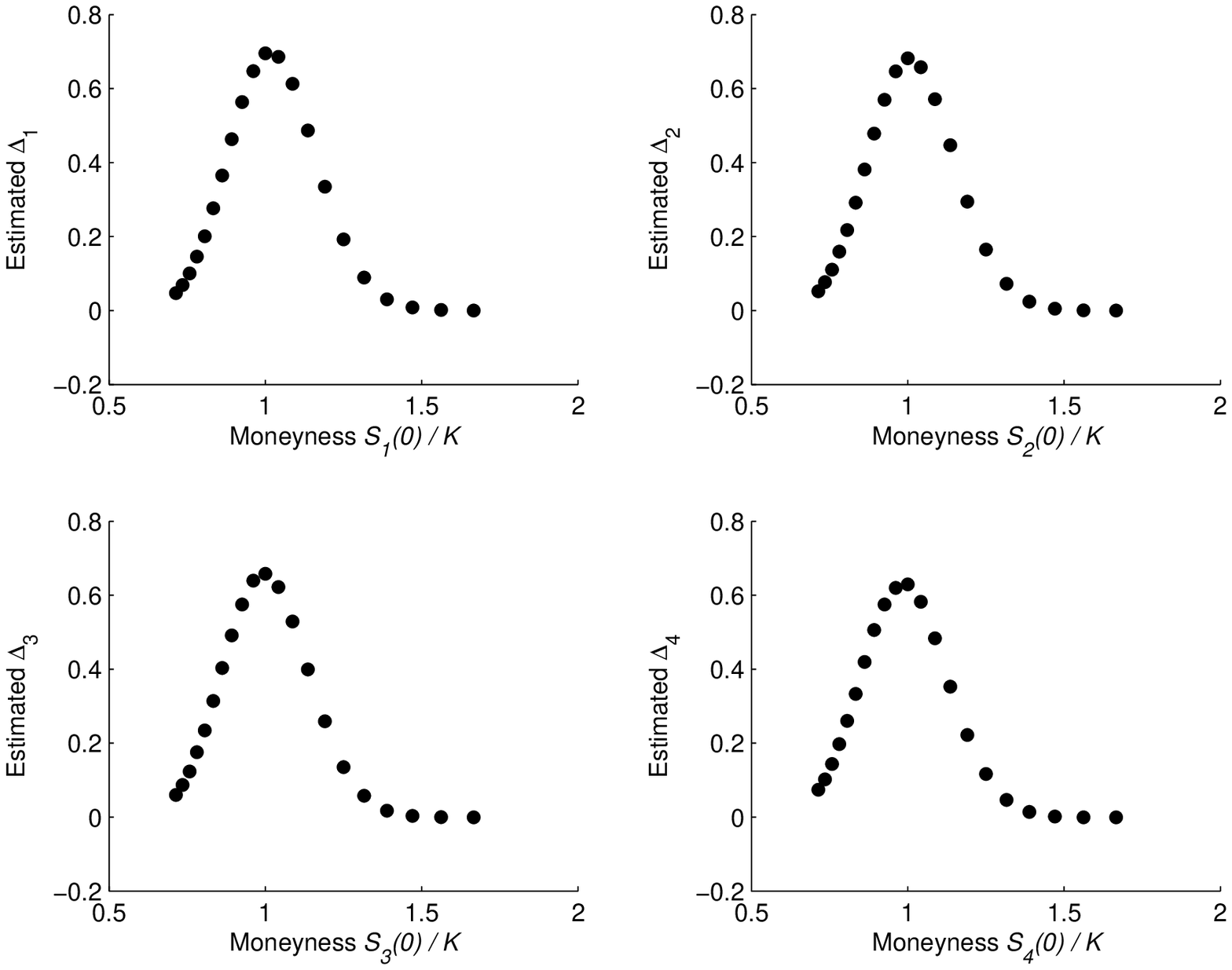}%
    \caption{Digital Option with Fixed Strike, $M=4$: Estimated Deltas with
    the Adaptive Localization.}\label{fig:deltaDig}
\end{figure}

\subsection{Exotic Option}
As a last experiment we perform a QMC numerical simulation in order
to estimate the Deltas of an exotic option. Table~\ref{tab:exot} and
Figures \ref{fig:errorExot} and \ref{fig:deltaExot} present the
results of this experiment. In this last example all the approaches
perform equally well, and the exotic structure of the payoff makes
its estimator unsensitive to the different localization parameter.
The finite difference is also performing well but is less precise if
we take into account the computational burden that is $2.61$ times
higher.

\begin{table}\centering
    \begin{tabular}{||c|c|c|c|c|c|c|c|c|c||}
        \hline
        \multicolumn{2}{||c|}{Adaptive}&\multicolumn{6}{|c|}{Localization}&\multicolumn{2}{|c||}{Fin.
        Diff.}\\
        \hline
        \multicolumn{2}{||c|}{}&\multicolumn{2}{|c|}{$\delta=1\%$}&\multicolumn{2}{|c|}{$\delta=5\%$}&\multicolumn{2}{|c|}{$\delta=10\%$}&\multicolumn{2}{|c||}{$\delta=1\%$}\\
        \hline
        $\Delta$&$\pm err$&$\Delta$&$\pm err$&$\Delta$&$\pm err$&$\Delta$&$\pm err$&$\Delta$&$\pm err$\\
        \hline
        $6.4$&$1.1$&$6.5$&$1.3$&$6.5$&$1.8$&$6.5$&$0.8$&$6.5$&$1.0$\\
        $6.6$&$1.6$&$6.6$&$1.2$&$6.6$&$1.8$&$6.6$&$1.0$&$6.6$&$1.1$\\
        $6.8$&$1.3$&$6.7$&$1.2$&$6.7$&$1.9$&$6.7$&$0.9$&$6.8$&$1.3$\\
        $7.0$&$1.5$&$6.9$&$1.4$&$6.9$&$1.7$&$6.9$&$1.2$&$6.9$&$2.0$\\
        $7.2$&$1.9$&$7.0$&$1.1$&$7.0$&$1.8$&$7.1$&$1.1$&$7.1$&$2.0$\\
        $7.4$&$1.5$&$7.2$&$1.4$&$7.2$&$1.8$&$7.2$&$1.3$&$7.3$&$2.2$\\
        $7.6$&$1.7$&$7.3$&$1.4$&$7.3$&$1.9$&$7.4$&$1.3$&$7.5$&$2.1$\\
        $7.8$&$1.9$&$7.5$&$1.5$&$7.5$&$1.6$&$7.5$&$1.7$&$7.7$&$2.1$\\
        $8.0$&$2.1$&$7.6$&$1.5$&$7.6$&$1.7$&$7.7$&$1.4$&$7.9$&$2.4$\\
        $8.2$&$1.7$&$7.8$&$1.7$&$7.8$&$1.8$&$7.9$&$1.5$&$8.1$&$1.8$\\
        \hline
    \end{tabular}
    \caption{Exotic, $M=10$: At-the-Money Deltas and Errors ($\times
    100$).}\label{tab:exot}
\end{table}

\begin{figure}
    \centering
    \includegraphics[width=1.1\textwidth]{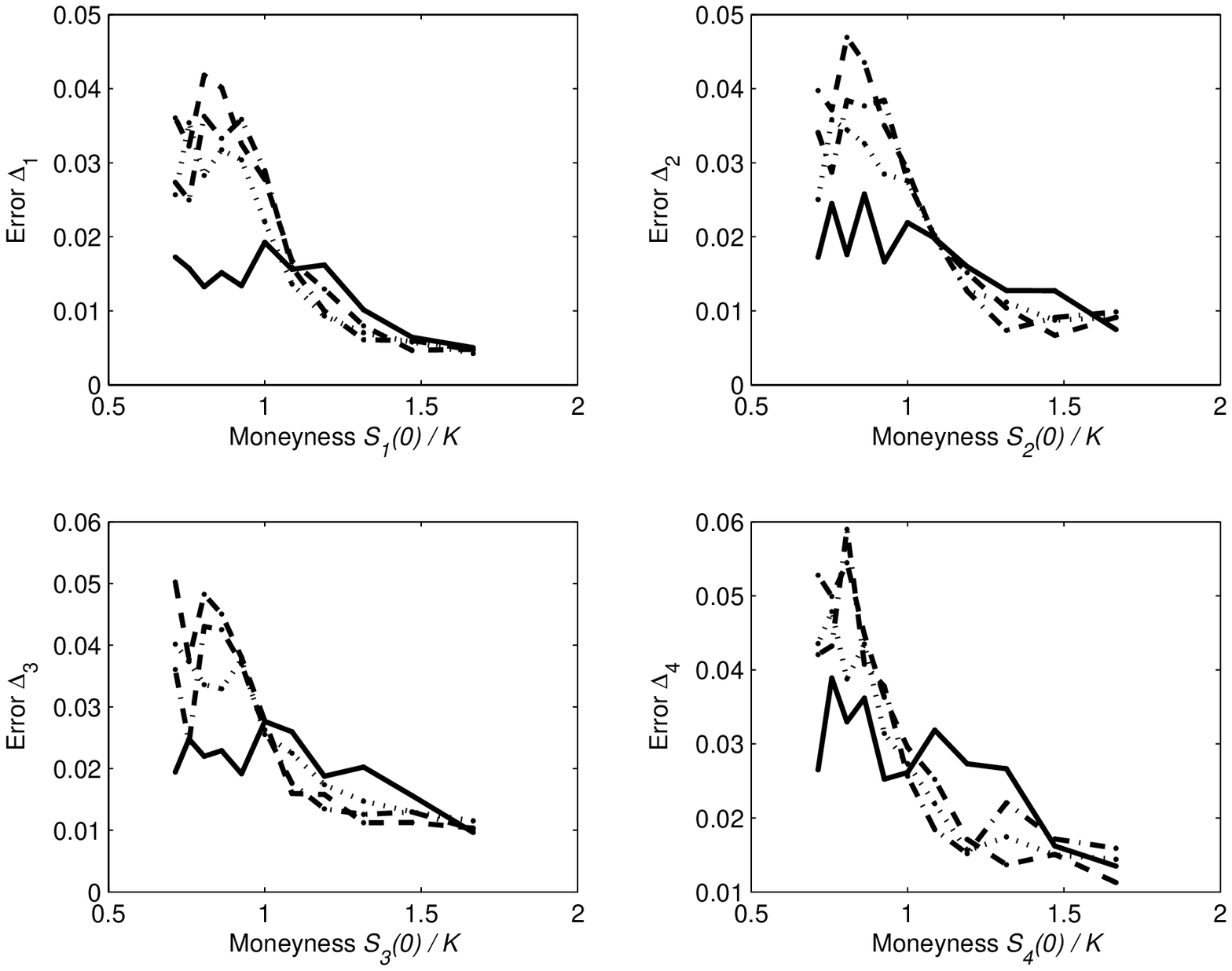}%
    \caption{Exotic Option, $M=4$: Estimation Errors. \newline Adaptive: Solid Line, Loc.
    $\delta=0.01$: Dashed line, Loc. $\delta=0.1$: Dotted line, Loc. $\delta=0.05$: Dash-dotted line.}
    \label{fig:errorExot}
\end{figure}

\begin{figure}
    \centering
    \includegraphics[width=1.1\textwidth]{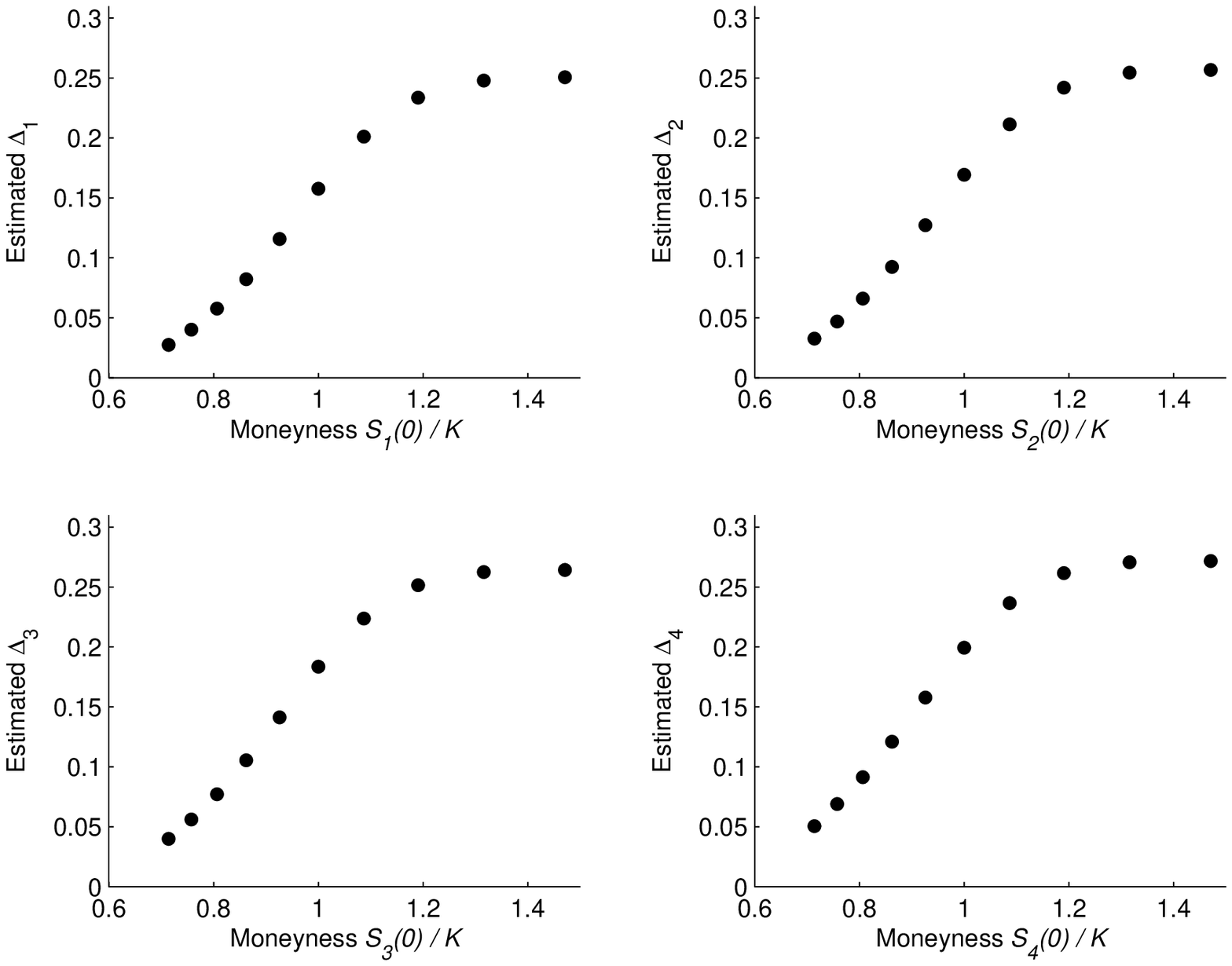}%
    \caption{Exotic Option, $M=4$: Estimated Deltas with
    the Adaptive Localization.}\label{fig:deltaExot}
\end{figure}

\section{Concluding Remarks}\label{sec:conclusions}
In this paper we have investigated the use of Malliavin calculus in
order to calculate the Greeks of multiasset complex path-dependent
options by QMC simulation. As a first result we have derived the
multidimensional version of the formulas obtained by Montero and
Kohatsu-Higa \cite{MKH2004} in the single asset case.The
multidimensional setting shows the advantage of the Malliavin
Calculus approach over alternative techniques that have been
previously proposed. These different techniques are hard to
implement and in particular, are computationally time consuming when
considering multiasset derivative securities. In addition, their
estimators potentially display a high variance (see for instance
Chen and Glasserman \cite{CG07}). In contrast, the use of the
generalized integral by part formula of Malliavin Calculus gives
enough flexibility in order to find unbiased estimators with low
variance. In the multidimensional context, we have found convenient
formulas that are easy and flexible to employ and permit to improve
the localization technique. Finally, we have performed a detailed
analysis on how the localization parameters can influence the
precision of the estimators. Moreover, we have proposed an alternate
approach, based on adaptive (Q)MC techniques that returns convenient
parameters that can be obtained \textit{on the flight} in the
simulation. This approach provides a better  precision with the same
computational burden. However further studies would be necessary to
enhance its accuracy assuming different dynamics and payoff
functions.

The proposed procedures, coupled with the enhanced version of
Quasi-Monte Carlo simulations as illustrated in Sabino
\cite{Sabino08b}, are discussed based on the numerical estimation of
the Deltas of call, digital Asian-style and Exotic basket options
with a fixed and a floating strike price in a multidimensional
Black-Scholes market.

\newpage
\bibliographystyle{QMCMalliavin}
\bibliography{tesi}

\end{document}